\def\RSsubtxt{section~}\newref{sub}{name = \RSsubtxt}}
\def\RSthmtxt{theorem~}\newref{thm}{name = \RSthmtxt}}
\def\RSlemtxt{lemma~}\newref{lem}{name = \RSlemtxt}}
\theoremstyle{plain}
\newtheorem{thm}{\protect\theoremname}[section]
\theoremstyle{definition}
\newtheorem{defn}[thm]{\protect\definitionname}
\theoremstyle{remark}
\newtheorem{rem}[thm]{\protect\remarkname}
\theoremstyle{plain}
\newtheorem{lem}[thm]{\protect\lemmaname}
\newenvironment{proof}[1][\protect\proofname]{\par
\normalfont\topsep6\p@\@plus6\p@\relax
\trivlist
\itemindent\parindent
\item[\hskip\labelsep
\scshape
#1]\ignorespaces
}{%
\endtrivlist\@endpefalse
}
\providecommand{\proofname}{Proof}
\theoremstyle{plain}
\newtheorem{cor}[thm]{\protect\corollaryname}
\providecommand{\corollaryname}{Corollary}
\providecommand{\definitionname}{Definition}
\providecommand{\lemmaname}{Lemma}
\providecommand{\remarkname}{Remark}
\providecommand{\theoremname}{Theorem}
\begin{document}

\title{Delay Equations of the Wheeler-Feynman Type}

\author{D.-A. Deckert, D. Dürr, N. Vona}

\maketitle
\begin{abstract}
We present an approximate model of Wheeler-Feynman electrodynamics
for which uniqueness of solutions is proved. It is simple enough to
be instructive but close enough to Wheeler-Feynman electrodynamics
such that we can discuss its natural type of initial data, constants
of motion and stable orbits with regard to Wheeler-Feynman electrodynamics.\\
\\
\textbf{Acknowledgements:} The authors want to thank Gernot Bauer
for enlightening discussions. D.-A.D. and N.V. gratefully acknowledge
financial support from the \emph{BayEFG} of the \emph{Freistaat Bayern}
and the \emph{Universität Bayern e.V.}. This project was also funded
by the post-doc program of the \emph{DAAD}.
\end{abstract}
\global\long\def\vect#1{\mathbf{#1}}

\section{Introduction}

Already in the early stages of classical electrodynamics it was observed
that the coupled equations of motion of Maxwell and Lorentz for point-charges
are ill-defined. Since then many attempts have been made to cure this
problem among which the most famous one is Dirac's mass renormalization
\cite{dirac_classical_1938}. However, none of the known cures has
yet led to a mathematically well-defined and physically sensible theory
of relativistic interaction of point-charges. Rather than being a
cure of the old theory, Wheeler-Feynman electrodynamics (WF) is a
new formulation of classical electrodynamics and is maybe the most
promising candidate for a divergence-free theory of electrodynamics
about point-charges that is capable of describing the phenomenon of
radiation damping \cite{wheeler_interaction_1945,wheeler_classical_1949}.
For an overview on the mathematical and physical difficulties of classical
electrodynamics and the role of WF we refer the interested reader
also to the introductory sections of \cite{bauer_maxwell-lorentz_2001}
and \cite{bauer_wheeler_2010}. In contrast to text-book electrodynamics,
which introduces fields as well as charges, WF is a theory only about
charges -- fields only occur as mathematical entities and not as dynamical
degrees of freedom. Let $N$ be the number of charges, $\vect q_{i,t}\in\mathbb{R}^{3}$
the position of the $i$-th charge at time $t$, $m_{i}>0$ its mass,
and 
\[
\vect p_{i,t}=\frac{m_{i}\vect v_{i,t}}{\sqrt{1-\vect v_{i,t}^{2}}},\qquad\vect v_{i,t}:=\frac{d\vect q_{i,t}}{dt}
\]
its relativistic momentum and velocity. The fundamental equations
of WF take the form

\begin{equation}
\frac{d}{dt}\begin{pmatrix}\vect q_{i,t}\\
\vect p_{i,t}
\end{pmatrix}=\begin{pmatrix}\vect v(\vect p_{i,t}):=\frac{\vect p_{i,t}}{\sqrt{m_{i}^{2}+\vect p_{i,t}^{2}}}\\
e_{i}\sum_{j\neq i}\vect F_{t}[\vect q_{j}](\vect q_{i,t},\vect p_{i,t})
\end{pmatrix},\qquad1\leq i\leq N.\label{eq:fundamental}
\end{equation}
We have chosen units such that the speed of light equals one. The
force term $\vect F_{t}[\vect q_{j}]$ is a functional of the trajectory
$t\mapsto\vect q_{j}$ and can be expressed by 
\[
\vect F_{t}[\vect q_{j}](\vect x,\vect p):=e_{j}\sum_{\pm}\left(\vect E_{t}^{\pm}[\vect q_{j}](\vect x)+\vect v(\vect p)\wedge\vect B_{t}^{\pm}[\vect q_{j}](\vect x)\right)
\]
where $(\vect E_{t}^{+}[\vect q_{j}],\vect B_{t}^{+}[\vect q_{j}])$
and $(\vect E_{t}^{-}[\vect q_{j}],\vect B_{t}^{-}[\vect q_{j}])$
denote the advanced and retarded Liénard-Wiechert fields -- in our
notation $\vect E$ represents the electric and $\vect B$ the magnetic
component and $\wedge$ is the outer product. The Liénard-Wiechert
fields are special solutions to the Maxwell equations corresponding
to a prescribed point-charge trajectory $t\mapsto\vect q_{j,t}$;
see e.g. \cite{bauer_wheeler_2010}. Their explicit form is
\begin{eqnarray}
\vect E_{t}^{\pm}[\vect q_{j}](\vect x) & := & \left[\frac{(\vect n_{j}\pm\vect v_{j})(1-\vect v_{j}^{2})}{\left\Vert \vect x-\vect q_{j}\right\Vert ^{2}(1\pm\vect n_{j}\cdot\vect v_{j})^{3}}+\frac{\vect n_{j}\wedge\left(\vect n_{j}\wedge\vect a_{j}\right)}{\left\Vert \vect x-\vect q_{j}\right\Vert (1\pm\vect n_{j}\cdot\vect v_{j})^{3}}\right]_{\pm},\label{eq:LW_E}\\
\vect B_{t}^{\pm}[\vect q_{j}](\vect x) & := & \mp\vect n_{j,\pm}\wedge\vect E_{t}^{\pm}[\vect{q_{j}}](\vect x).\nonumber 
\end{eqnarray}
Here, we have used the abbreviations
\begin{equation}
\vect q_{j,\pm}=\vect q_{j,t^{\pm}},\qquad\vect n_{j,\pm}=\frac{\vect x-\vect q_{j,t_{j}^{\pm}}}{\left\Vert \vect x-\vect q_{j,t_{j}^{\pm}}\right\Vert },\qquad\vect v_{j}=\frac{d\vect q_{j,t}}{dt}\Bigg|_{t=t_{j}^{\pm}},\qquad\vect a_{j}=\frac{d^{2}\vect q_{j,t}}{dt^{2}}\Bigg|_{t=t_{j}^{\pm}},\label{eq:LW_notation}
\end{equation}
and the delayed times $t_{j}^{+}$ and $t_{j}^{-}$ are implicitly
defined as solutions to 
\begin{equation}
t_{j}^{\pm}(t,\vect x):=t\pm\left\Vert \vect x-\vect q_{j,t_{j}^{\pm}(t,\vect x)}\right\Vert .\label{eq:wf_delay}
\end{equation}
Given the space-time point $(t,\vect x)$ the delayed times $t_{j}^{+}$
and $t_{j}^{-}$ are determined by the intersection points of the
trajectory $t\mapsto\vect q_{j,t}$ with the future and past light-cone
of $(t,\vect x)$, respectively. As long as $\left\Vert \dot{\vect q}_{j,t}\right\Vert <1$
both times $t_{j}^{+}$ and $t_{j}^{-}$ are well-defined. As it becomes
apparent from (\ref{eq:wf_delay}) the WF equations (\ref{eq:fundamental})
involve terms with time-like advanced as well as retarded arguments
which makes them mathematically very hard to handle: 
\begin{itemize}
\item The question of existence of solutions is completely open with the
exception of the following special cases: Schild found explicit solutions
formed by two attracting charges that revolve around each other on
stable orbits \cite{schild_electromagnetic_1963}. Driver proved existence
and uniqueness of solutions for two repelling charges constrained
to the straight line whenever initially the relative velocity is zero
and the spatial separation is sufficiently large \cite{driver_canfuture_1979}.
Furthermore, Bauer proved existence of solutions in the case of two
repelling charges constrained to the straight line \cite{bauer_ein_1997},
and a more recent result ensures the existence of solutions on finite
but arbitrarily large time intervals for $N$ arbitrary charges in
three spatial dimension \cite{bauer_wheeler_2010}.
\item It is not even clear what can be considered to be sensible initial
data to uniquely identify WF solutions. While Driver's uniqueness
result suggests the specification of initial positions and momenta
of the charges, Bauer's work points to the use of asymptotic positions
and velocities to distinguish scattering solutions, and one sentence
below Figure 3 in \cite{wheeler_classical_1949} hints to a certain
configuration of whole trajectory strips of the charges as initial
conditions.
\end{itemize}
Furthermore, one important issue of WF is the justification of the
so called absorber condition and the derivation of the irreversible
behavior which we experience in radiation phenomena \cite{wheeler_interaction_1945}.
The link between WF and experience must be based on a notion of typical
trajectories, i.e. on a measure of typicality for WF dynamics. Such
a measure is unknown and at the moment out of reach. A generalization
to many particles of the approximate model we consider next provides
an excellent case study for introducing a notion of typicality for
this kind of non-markovian dynamics. Note that uniqueness as well
as conservation of energy are very likely to be important for defining
a measure of typicality in the sense of Boltzmann. This is work in
progress.\\

Our intention behind this work is to provide a pedagogical introduction
to the mathematical structures arising from delay differential equations
of the WF type by discussing initial data and uniqueness, constants
of motion and stable orbits in a non-trivial approximate model of
WF, which was introduced in \cite{deckert_electrodynamic_2010}, and
for which many mathematical questions can be answered satisfactorily.
Compared to (\ref{eq:fundamental}) we make the following simplifications:
\begin{itemize}
\item We consider only two charges, i.e. $N=2.$
\item As fundamental equations of this approximate model we take the form
(\ref{eq:fundamental}) where we replace the Liénard-Wiechert fields
$(\vect E_{t}^{\pm}[\vect q_{j}](\vect x),\vect B_{t}^{\pm}[\vect q_{j}](\vect x))$
with
\begin{equation}
\vect E_{t}^{\parallel,\pm}[\vect q_{j}](\vect x):=e_{j}\frac{\vect x-\vect q_{j,t_{j}^{\pm}(t,\vect x)}}{\left\Vert \vect x-\vect q_{j,t_{j}^{\pm}(t,\vect x)}\right\Vert ^{3}},\qquad\vect B_{t}^{\pm}[\vect q_{j}]=0,\label{eq:new fields}
\end{equation}
i.e. the Coulomb fields at the respective delayed times in the future
and the past.
\end{itemize}
The resulting equations are
\begin{equation}
\frac{d}{dt}\begin{pmatrix}\vect q_{i,t}\\
\vect p_{i,t}
\end{pmatrix}=\begin{pmatrix}\vect v(\vect p_{i,t})\\
e_{i}e_{j}\left[\vect F\left(\vect q_{i,t}-\vect q_{j,t_{j}^{+}(t,\vect q_{i,t})}\right)+\vect F\left(\vect q_{i,t}-\vect q_{j,t_{j}^{-}(t,\vect q_{i,t})}\right)\right]
\end{pmatrix},\qquad i,j\in\{1,2\},\quad j\neq i,\label{eq:toy model}
\end{equation}
where the force field is given by
\begin{equation}
\vect F:\mathbb{R}^{3}\setminus\{0\}\to\mathbb{R}^{3}\setminus\{0\},\qquad\vect x\mapsto\vect F(\vect x):=\frac{\vect x}{\left\Vert \vect x\right\Vert ^{3}}.\label{eq:force field}
\end{equation}

We emphasize that the delay function (\ref{eq:wf_delay}) is the same
as in WF and that the simplified fields (\ref{eq:new fields}) are
the longitudinal modes of the Liénard-Wiechert fields $(\vect E_{t}^{\pm}[\vect q_{j}](\vect x),\vect B_{t}^{\pm}[\vect q_{j}](\vect x))$,
i.e.
\[
\nabla\cdot\left(\vect E_{t}^{\pm}[\vect q_{j}](\vect x)-\vect E_{t}^{\parallel,\pm}[\vect q_{j}](\vect x)\right)=0,\qquad\nabla\wedge\vect E_{t}^{\parallel,\pm}[\vect q_{j}](\vect x)=0.
\]
Furthermore, for small velocities and accelerations of the $j$-th
charge one finds $\vect E_{t}^{\pm}[\vect q_{j}](\vect x)\approx\vect E_{t}^{\parallel,\pm}[\vect q_{j}](\vect x)$.
In this sense one can regard this simplified model as an physically
interesting approximation to WF. Note also that, in contrast to $\vect E_{t}^{\pm}[\vect q_{j}]$,
the simplified field $\vect E_{t}^{\parallel,\pm}[\vect q_{j}]$ does
not depend on the acceleration of the $j$-th charge; compare (\ref{eq:LW_E}).
This fact makes it easier to control smoothness of solutions, however,
is not the key difference that allows us to prove uniqueness of solutions
for the approximate model. \\
\\
This paper is organized as follows: 
\begin{enumerate}
\item In \secref{Construction-of-Solutions} we discuss natural initial
data for the approximate model (\defref{initial_data}), and show
how to construct solutions uniquely depending on given initial data
(\thmref{uniquenss}). A byproduct is the observation that in general
the specification of initial positions and momenta of the two charges
is not sufficient to ensure uniqueness (\corref{newtonian}).
\item In \secref{Constants-of-Motion} we discuss the structure of constants
of motion by reference to the energy conservation in the approximate
model. The energy functional is defined in \defref{energy} and the
energy conservation is shown in \thmref{energy}.
\item In \secref{Stable-Orbits} we identify stable orbits (\defref{schild})
which solve (\ref{eq:toy model}) as proven in \thmref{schild}.
\item We conclude in \secref{Conclusion} by putting these results in perspective
to WF.
\end{enumerate}

\paragraph{Notation.}
\begin{itemize}
\item If not otherwise specified we use the convention that $i,j\in\{1,2\}$
and $j\neq i$.
\item Any derivative at a boundary of an interval is understood as the left-hand
or right-hand side derivative.
\item Vectors in $\vect x,\vect y\in\mathbb{R}^{3}$ are denoted by bold
letters, the inner product by $\vect x\cdot\vect y$, the outer product
by $\vect x\wedge\vect y$, and the euclidean norm by $\left\Vert \vect x\right\Vert $.
\item $\nabla$, $\nabla\cdot$, and $\nabla\wedge$ denote the gradient,
divergence and curl w.r.t. $\vect x$, respectively.
\item Overset dots denote derivatives with respect to time, i.e. $\dot{\vect q}_{i,t}=\frac{d\vect q_{i,t}}{dt}$
and $\ddot{\vect q}_{i,t}=\frac{d^{2}\vect q_{i,t}}{dt^{2}}$.
\item The future and past light-cone of the space-time point $(t,\vect x)$
is defined as the set 
\[
L_{\pm}(t,\vect x):=\{(s,\vect y)\,|\,(t-s)^{2}-(\vect x-\vect y)^{2}=0,t>\pm s\}
\]
 for $+$ and $-$, respectively.
\end{itemize}

\section{Construction of Solutions\label{sec:Construction-of-Solutions}}

For this section we regard the indices $(i,j)$ fixed to either $(1,2)$
or $(2,1)$. We consider the following initial data (see \figref{Initial-data.}):

\begin{figure}
\hfill{}\subfloat[\label{fig:Initial-data.}The thick black stripes represent the initial
data.]{\includegraphics{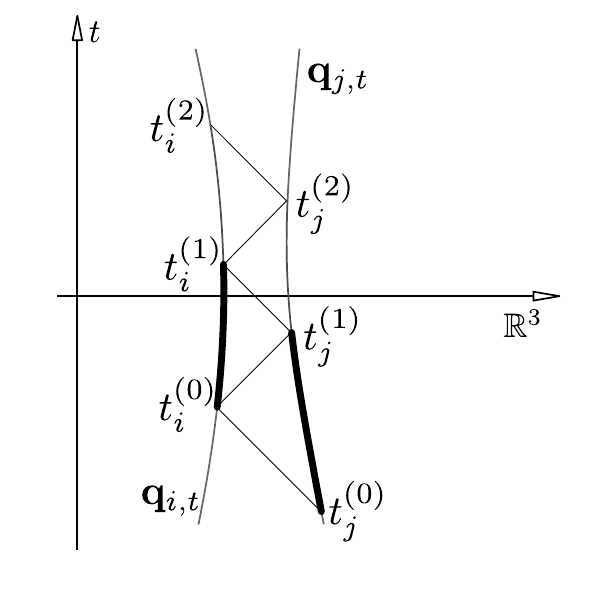}}\hfill{}\subfloat[\label{fig:Nature-of-the}Nature of the interaction.]{\includegraphics{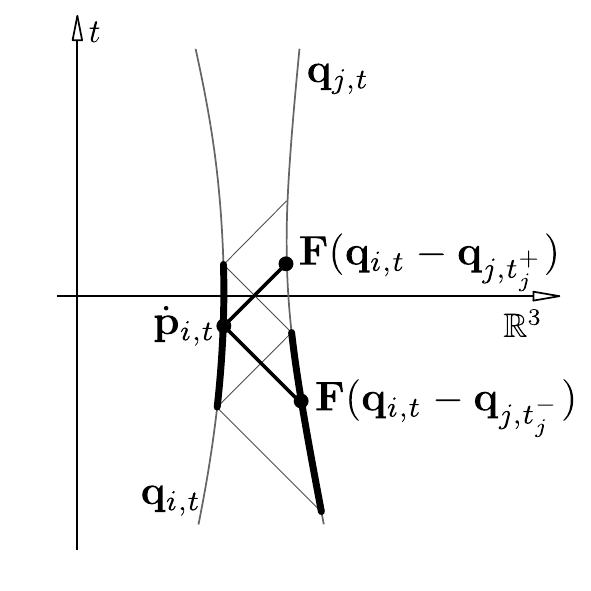}}\hfill{}

\caption{}
\hfill{}
\end{figure}

\begin{defn}
\label{def:initial_data}We call a pair $\left(\vect q_{i}^{(0)},\vect p_{i}^{(0)}\right)_{i=1,2}$
of smooth position and momentum trajectory stripes
\begin{eqnarray*}
\vect q_{i}^{(0)}: & \mathbb{R}\supset[t_{i}^{(0)},t_{i}^{(1)}] & \to\mathbb{R}^{3},\qquad t\mapsto\vect q_{i,t}^{(0)},\\
\vect p_{i}^{(0)}: & \mathbb{R}\supset[t_{i}^{(0)},t_{i}^{(1)}] & \to\mathbb{R}^{3},\qquad t\mapsto\vect p_{i,t}^{(0)},\qquad i\in\{1,2\}
\end{eqnarray*}
that fulfill the following conditions \emph{initial data} for equation
(\ref{eq:toy model}):
\begin{enumerate}[label=(\roman*)]
\item For times $t\in[t_{i}^{(0)},t_{i}^{(1)}]\cap[t_{j}^{(0)},t_{j}^{(1)}]$
it holds
\[
\vect q_{i,t}^{(0)}\neq\vect q_{j,t}^{(0)}.
\]

\item For all $t\in[t_{i}^{(0)},t_{i}^{(1)}]$ it holds
\begin{equation}
\frac{d}{dt}\vect q_{i,t}^{(0)}=\vect v(\vect p_{i,t}^{(0)})\equiv\frac{\vect p_{i,t}^{(0)}}{\sqrt{m_{i}^{2}+\left\Vert \vect p_{i,t}^{(0)}\right\Vert ^{2}}},\qquad i\in\{1,2\}.\label{eq:velocity}
\end{equation}

\item The times $t_{i}^{(0)},t_{i}^{(1)},t_{j}^{(0)},t_{j}^{(1)}$ relate
to each other according to
\begin{equation}
t_{i}^{(0)}=t_{i}^{+}\left(t_{j}^{(0)},\vect q_{j,t_{j}^{(0)}}^{(0)}\right),\qquad t_{j}^{(1)}=t_{j}^{+}\left(t_{i}^{(0)},\vect q_{i,t_{i}^{(0)}}^{(0)}\right),\qquad t_{i}^{(1)}=t_{i}^{+}\left(t_{j}^{(1)},\vect q_{j,t_{j}^{(1)}}^{(0)}\right).\label{eq:times}
\end{equation}

\item At time $t=t_{i}^{(0)}$and for all integers $n\geq0$ the trajectories
obey \foreignlanguage{english}{
\begin{equation}
\frac{d^{n}}{dt^{n}}\begin{pmatrix}\dot{\vect q}_{i,t}\\
\dot{\vect p}_{i,t}
\end{pmatrix}=\frac{d^{n}}{dt^{n}}\begin{pmatrix}\vect v(\vect p_{i,t})\\
e_{i}e_{j}\left[\vect F\left(\vect q_{i,t}-\vect q_{j,t_{j}^{+}(t,\vect q_{i,t})}\right)+\vect F\left(\vect q_{i,t}-\vect q_{j,t_{j}^{-}(t,\vect q_{i,t})}\right)\right]
\end{pmatrix}.\label{eq:fulfilled i}
\end{equation}
}
\item At time $t=t_{j}^{(1)}$and for all integers $n\geq0$ the trajectories
obey \foreignlanguage{english}{
\begin{equation}
\frac{d^{n}}{dt^{n}}\begin{pmatrix}\dot{\vect q}_{j,t}\\
\dot{\vect p}_{j,t}
\end{pmatrix}=\frac{d^{n}}{dt^{n}}\begin{pmatrix}\vect v(\vect p_{j,t})\\
e_{j}e_{i}\left[\vect F\left(\vect q_{j,t}-\vect q_{i,t_{i}^{+}(t,\vect q_{j,t})}\right)+\vect F\left(\vect q_{j,t}-\vect q_{i,t_{i}^{-}(t,\vect q_{j,t})}\right)\right]
\end{pmatrix}.\label{eq:fulfilled j}
\end{equation}
}
\end{enumerate}
\end{defn}
\begin{rem}
Note that (\ref{eq:velocity}) requires $\left\Vert \dot{\vect q}_{i,t}\right\Vert <1$
for $i\in\{1,2\}$ and 
\[
\vect p_{i,t}=\frac{m_{i}\dot{\vect q}_{i,t}}{\sqrt{1-\dot{\vect q}_{i,t}^{2}}}.
\]
Furthermore, such initial data can be constructed as follows: 
\begin{enumerate}
\item Choose times $t_{j}^{(0)}<t_{j}^{(1)}$, an arbitrary smooth trajectory
strip $\vect q_{j}^{(0)}:[t_{j}^{(0)},t_{j}^{(1)}]\to\mathbb{R}^{3}$
with $\left\Vert \dot{\vect q}_{j,t}^{(0)}\right\Vert <1$, one space-point
$(t_{i}^{(0)},\vect q_{i,t_{i}^{(0)}}^{(0)})$ on the intersection
of the forward light-cone of $(t_{j}^{(0)},\vect q_{j,t_{j}^{(0)}}^{(0)})$
with the backward light-cone of $(t_{j}^{(1)},\vect q_{j,t_{j}^{(1)}}^{(0)})$,
and one space-point $(t_{i}^{(1)},\vect q_{i,t_{i}^{(1)}}^{(0)})$
somewhere on the forward light-cone of $(t_{j}^{(1)},\vect q_{j,t_{j}^{(1)}}^{(0)})$
and inside the forward light-cone of $(t_{i}^{(0)},\vect q_{i,t_{i}^{(0)}}^{(0)})$.
\item Define $\vect p_{j,t}^{(0)}:=\frac{m_{j}\dot{\vect q}_{j,t}^{(0)}}{\sqrt{1-\dot{\vect q}_{j,t}^{2}}}$
for $t\in[t_{j}^{(0)},t_{j}^{(1)}]$ and, using (\ref{eq:toy model}),
compute all derivatives of $\dot{\vect p}_{i,t}$ at the times $t=t_{i}^{(0)}$
and $t=t_{i}^{(1)}$.
\item Choose a smooth trajectory strip $\vect q_{i}^{(0)}:[t_{i}^{(0)},t_{i}^{(1)}]\to\mathbb{R}^{3}$
through the space-time points $(t_{i}^{(0)},\vect q_{i,t_{i}^{(0)}}^{(0)})$
and $(t_{i}^{(1)},\vect q_{i,t_{i}^{(1)}}^{(0)})$ such that $\left\Vert \dot{\vect q}_{j,t}^{(0)}\right\Vert <1$
and that $t\mapsto\vect p_{i,t}^{(0)}:=\frac{m_{i}\dot{\vect q}_{i,t}^{(0)}}{\sqrt{1-\left\Vert \dot{\vect q}_{i,t}\right\Vert ^{2}}}$
smoothly connects to the derivatives computed in step 2.
\end{enumerate}
\end{rem}
Provided such initial data we shall prove our first result:
\begin{thm}
\label{thm:uniquenss}Given the initial data $\left(\vect q_{i}^{(0)},\vect p_{i}^{(0)}\right)_{i=1,2}$
there exist two smooth maps
\begin{equation}
\mathbb{R}\supseteq D_{i}\to\mathbb{R}^{3}\times\mathbb{R}^{3},\qquad t\mapsto(\vect q_{i,t},\vect p_{i,t}),\qquad i\in\{1,2\},\label{eq:solution}
\end{equation}
such that:
\begin{enumerate}[label=(\roman*)]
\item $(\vect q_{i,t},\vect p_{i,t})=(\vect q_{i,t}^{(0)},\vect p_{i,t}^{(0)})$
for all \foreignlanguage{english}{\textup{$t\in[t_{i}^{(0)},t_{i}^{(1)}]$}}
and $i\in\{1,2\}$.
\item $t\mapsto(\vect q_{i,t},\vect p_{i,t})$ solves (\ref{eq:toy model})
on $D_{i}:=\left(t_{i}^{+}(T_{j}^{\min},\vect q_{j,T_{j}^{\min}}),t_{i}^{-}(T_{j}^{\max},\vect q_{j,T_{j}^{\max}})\right)$
for $i\in\{1,2\}$, $j\neq i$. 
\item For $i=1,2$ let $\widetilde{D}_{i}\subseteq\mathbb{R}$ be an interval
such that $[t_{i}^{(0)},t_{i}^{(1)}]\subseteq\widetilde{D}_{i}$ and
let $\widetilde{D}_{i}\to\mathbb{R}^{3}\times\mathbb{R}^{3}$, $t\mapsto(\widetilde{\vect q}_{i,t},\widetilde{\vect p}_{i,t})$
be a smooth map such that $t\mapsto(\widetilde{\vect q}_{i,t},\widetilde{\vect p}_{i,t})$
solves (\ref{eq:toy model}) for $t\in\widetilde{D}_{i}$. Then: 
\[
(\widetilde{\vect q}_{i,t},\widetilde{\vect p}_{i,t})=(\vect q_{i,t},\vect p_{i,t})\,\forall t\in D_{i}\cap\widetilde{D}_{i},i\in\{1,2\}\]
\[\Updownarrow\]
\[(\widetilde{\vect q}_{i,t},\widetilde{\vect p}_{i,t})=(\vect q_{i,t},\vect p_{i,t})\,\forall t\in[t_{i}^{(0)},t_{i}^{(1)}],i\in\{1,2\}.\]

\end{enumerate}
Given appropriate constants $d>0$ and  $0\leq v<1$, the times $-\infty\leq T_{i}^{\min}\leq t_{i}^{(0)}<t_{i}^{(1)}\leq T_{i}^{\max}\leq\infty$
are defined such that $[T_{i}^{\min},T_{i}^{\max}]$ is the largest
interval containing $[t_{i}^{(0)},t_{i}^{(1)}]$ with the property:
\begin{equation}
\left\Vert \dot{\vect q}_{i,t}\right\Vert \leq v,\qquad\left\Vert \vect q_{i,t}-\vect q_{j,t_{j}^{\pm}(t,\vect q_{i,t})}\right\Vert \geq d,\qquad\forall\, t\in[T_{i}^{\min},T_{i}^{\max}]\label{eq:no singularity}
\end{equation}
Their value is determined during the construction of (\ref{eq:solution})
in the proof.\end{thm}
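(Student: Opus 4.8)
The plan is to turn (\ref{eq:toy model}) into a genuine functional ODE on each domain $D_i$ and then to extract existence, uniqueness and smoothness from a single contraction estimate. I would first record the properties of the delay maps. For a trajectory $s\mapsto\vect q_{j,s}$ with $\|\dot{\vect q}_{j,s}\|\le v<1$, the defining relation $t_j^\pm(t,\vect x)=t\pm\|\vect x-\vect q_{j,t_j^\pm}\|$ of (\ref{eq:wf_delay}) has, for each $(t,\vect x)$ off the trajectory, a unique solution, and by the implicit function theorem $t_j^\pm$ is as smooth as $\vect q_j$, with $\partial_t t_j^\pm$ and $\nabla_{\vect x}t_j^\pm$ bounded in terms of $v$ alone (the sub-luminal bound keeps the factor $1\pm\vect n_j\cdot\vect v_j$ away from $0$). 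On the region where (\ref{eq:no singularity}) holds the delays are moreover separated from $t$ by at least $d$.

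The linchpin is a geometric observation: the relations (\ref{eq:times}) and the very definition of $D_i$ are arranged so that, once $\vect q_j$ is known on all of $[T_j^{\min},T_j^{\max}]$, both delayed arguments of the equation for $i$ stay inside that interval for every $t\in D_i$. Indeed $t>t_i^+(T_j^{\min},\vect q_{j,T_j^{\min}})$ is equivalent to $t_j^-(t,\vect q_{i,t})>T_j^{\min}$, and $t<t_i^-(T_j^{\max},\vect q_{j,T_j^{\max}})$ is equivalent to $t_j^+(t,\vect q_{i,t})<T_j^{\max}$, while trivially $t_j^-(t)<t<t_j^+(t)$. Hence on $D_i$ the right-hand side of (\ref{eq:toy model}) is a bona fide functional of the known $\vect q_j$ and of the instantaneous state $(\vect q_{i,t},\vect p_{i,t})$, so (\ref{eq:toy model}) becomes an ordinary differential equation with the stripe $[t_i^{(0)},t_i^{(1)}]$ as interior data; symmetrically, knowing $\vect q_i$ on $[T_i^{\min},T_i^{\max}]$ closes the equation for $j$ on $D_j$.

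Because the two closures are mutually dependent --- the advanced argument of $i$'s equation reaches into the future of $j$ and vice versa --- the extension cannot be obtained by a plain initial-value march, and this is the \textbf{main obstacle}. I would resolve it by a contraction over the pair of trajectories simultaneously. On the set of pairs $(\vect q_i,\vect q_j)$ continued $C^\infty$ from the stripes and obeying (\ref{eq:no singularity}) with the fixed $d,v$, define the map $\Phi$ that (a) recomputes the delays $t_j^\pm,t_i^\pm$ from the current iterate, (b) evaluates the force $\vect F$ at the delayed positions, and (c) integrates (\ref{eq:toy model}) from the stripe data to produce the next iterate. Using separation $\ge d$ to bound $\|\vect F\|\le d^{-2}$ and $\|D\vect F\|\le 2d^{-3}$, together with the $v$-bounds on $\partial t_j^\pm$ above, a Gronwall estimate makes $\Phi$ Lipschitz with constant $<1$ once the domain is taken no larger than the region where (\ref{eq:no singularity}) holds; its unique fixed point is the solution. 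The hard part of this estimate is the advanced contribution, which couples the two unknown trajectories across a light-cone gap of size $\ge d$ rather than through past data alone: this is precisely where $d>0$ and $v<1$ enter essentially and what forces the restriction to $[T_i^{\min},T_i^{\max}]$.

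Finally, smoothness of the fixed point follows by differentiating the closed equation and bootstrapping, since the delays are smooth and $\vect F$ is smooth off the origin where separation $\ge d$; the compatibility conditions (iv)--(v) of \defref{initial_data}, which encode the equation and all its $t$-derivatives at the two corner times $t_i^{(0)}$ and $t_j^{(1)}$ where one particle's equation already references the full extent of the partner's stripe, guarantee that the extension glues to the stripes in $C^\infty$ rather than merely continuously. I would define $T_i^{\max}$ and $T_i^{\min}$ as the extreme times up to which the continuation proceeds while (\ref{eq:no singularity}) still holds --- it breaks only when a speed reaches $v$ or a separation drops to $d$ --- and recover the stated $D_i$ by inverting the light-cone relations exactly as in the second paragraph. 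For part (iii), the downward implication is trivial because $[t_i^{(0)},t_i^{(1)}]\subseteq D_i$, and the upward one is uniqueness of the fixed point applied on $D_i\cap\widetilde{D}_i$: two smooth solutions of (\ref{eq:toy model}) agreeing on the stripes satisfy the same closed equation with the same interior data and hence coincide wherever both are defined.
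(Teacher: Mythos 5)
Your proposal takes a genuinely different route from the paper --- a simultaneous Picard/contraction argument on the pair of trajectories --- and it has a structural gap that the paper's actual argument is specifically designed to avoid. The difficulty is that the two ``closed'' functional ODEs do not close up on matching domains: to integrate the equation for particle $i$ over all of $D_{i}$ you need $\vect q_{j}$ at advanced times sweeping up to $T_{j}^{\max}$, whereas integrating the equation for particle $j$ only produces $\vect q_{j}$ up to $\sup D_{j}=t_{j}^{-}(T_{i}^{\max},\vect q_{i,T_{i}^{\max}})$, which is in general strictly below $T_{j}^{\max}$. So your map $\Phi$ either never updates the outer layer of $\vect q_{j}$ on $(t_{j}^{-}(T_{i}^{\max}),T_{j}^{\max}]$ --- in which case the fixed point depends on the arbitrary initial guess there, that arbitrariness feeds back into $\vect q_{i}$ near the top of $D_{i}$ through the advanced force term, and claim (iii) fails --- or you must determine that outer layer from the equation for $i$ itself. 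The latter is exactly what the paper does, and it is only possible because the force field (\ref{eq:force field}) has a \emph{global inverse} $\vect I$ (\lemref{key_ingredient}): from the known $\dot{\vect p}_{i,t}$ and the retarded contribution one solves (\ref{eq:inverse equation}) for the advanced \emph{position} $\vect q_{j,t_{j}^{+}}$, thereby extending $\vect q_{j}$ forward by an explicit, forced step; iterating this leapfrog between the two particles yields both existence and uniqueness with no fixed-point argument at all. Your proposal never invokes this invertibility, and the paper's concluding section stresses that the construction ``relies sensitively'' on it --- a contraction scheme that ignored it would apply equally to the full Wheeler--Feynman force, where no such uniqueness proof is known, which is a strong hint that something essential is missing.

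There is a second, more technical gap: even where $\Phi$ is well defined, its Lipschitz constant scales with the length of the time interval (the Gronwall factor), and the usual remedy of an exponentially weighted norm fails here because a weight that tames the retarded term amplifies the advanced one. So the contraction can only be run on intervals of length $O(d^{3}/e_1 e_2)$, and reaching all of $[T_{i}^{\min},T_{i}^{\max}]$ would require a continuation argument --- restarting from new stripe data satisfying all of \defref{initial_data}, including the light-cone relations (\ref{eq:times}) --- which you do not supply. Your reading of conditions (iv)--(v) of \defref{initial_data} as the $C^{\infty}$ gluing conditions at the corner times is correct and matches the paper's use of \lemref{key_ingredient}(ii) in (\ref{eq:check smoothness}), and your observation that (\ref{eq:no singularity}) is what keeps the delay maps and the force under control is also right; but the core of the theorem, namely that the initial stripes \emph{force} the entire continuation, is carried in the paper by the algebraic inversion of $\vect F$, not by a metric fixed-point principle.
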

\begin{rem}
Our focus lies on the uniqueness assertion (iii) of \thmref{uniquenss}.
We do not attempt to give a priori bounds on $T_{i}^{\min},T_{i}^{\max}$,
$i\in\{1,2\}$, whose values are determined during the dynamics by
condition (\ref{eq:no singularity}). This condition is needed to
prevent two types of singularities that can occur: First, the approach
of the speed of light, and second, collision or infinitesimal approach
of charges. These singularities can also be present in WF which can
be seen directly from the form of the fields (\ref{eq:LW_E}). While
the second one is familiar since it is of the same type as seen in
the $N$-body problem of Newtonian gravitation \cite{siegel_lectures_1971},
the first one is very specific to WF-type delay problems. Such singularities
are due to the nature of the delay times $t_{j}^{+}$ and $t_{j}^{-}$
defined in (\ref{eq:wf_delay}) which tend to plus or minus infinity
if the $j$-th charge approaches the speed of light in the future
or the past, respectively; the origin of this singularity can be seen
best in (\ref{eq:d_delay}). Because of angular momentum conservation
it is however expected that for $N=2$ charges one always finds $T_{\min}=-\infty$
and $T_{\max}=\infty$, which is at least true for the solutions given
in \secref{Stable-Orbits}. A treatment of the $N$-body problem will
require a notion of typicality of solutions.
\end{rem}
The key ingredient of the proof, which can be checked by direct computation,
is the following:
\begin{lem}
\label{lem:key_ingredient}The following statements are true:
\begin{enumerate}[label=(\roman*)]
\item The map $\vect F$ defined in (\ref{eq:force field}) is bijective
and its inverse is given by
\[
\vect I:\mathbb{R}^{3}\setminus\{0\}\to\mathbb{R}^{3}\setminus\{0\},\qquad\vect y\mapsto\vect I(\vect y):=\frac{\vect y}{\left\Vert \vect y\right\Vert ^{3/2}}.
\]

\item Let $\left(\vect q_{i}^{(0)},\vect p_{i}^{(0)}\right)_{i=1,2}$ be
given initial data. For any integer $n\geq0$ equations (\ref{eq:fulfilled i})
and (\ref{eq:fulfilled j}) are equivalent to
\[
\frac{d^{n}}{dt^{n}}\left(\vect q_{i,t}-\vect q_{j,t_{j}^{\pm}(t,\vect q_{i,t})}\right)=\frac{d^{n}}{dt^{n}}\vect I\left(\frac{1}{e_{i}e_{j}}\dot{\vect p}_{i,t}-\vect F\left(\vect q_{i,t}-\vect q_{j,t_{j}^{\mp}(t,\vect q_{i,t})}\right)\right)\qquad\text{for }t=t_{i}^{(0)},
\]
and
\[
\frac{d^{n}}{dt^{n}}\left(\vect q_{j,t}-\vect q_{i,t_{i}^{\pm}(t,\vect q_{j,t})}\right)=\frac{d^{n}}{dt^{n}}\vect I\left(\frac{1}{e_{j}e_{i}}\dot{\vect p}_{j,t}-\vect F\left(\vect q_{j,t}-\vect q_{i,t_{i}^{\mp}(t,\vect q_{j,t})}\right)\right)\qquad\text{for }t=t_{j}^{(0)},
\]
respectively.
\end{enumerate}
\end{lem}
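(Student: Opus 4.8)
The plan is to treat the two parts independently, each by the direct computation advertised before the statement. For part~(i) I would first note that $\vect I$ maps $\mathbb{R}^3\setminus\{0\}$ into itself and then verify the two composition identities. Writing $r=\left\Vert\vect x\right\Vert$ one has $\left\Vert\vect F(\vect x)\right\Vert=r^{-2}$, whence $\vect I(\vect F(\vect x))=\vect F(\vect x)\left\Vert\vect F(\vect x)\right\Vert^{-3/2}=\frac{\vect x}{r^{3}}\,r^{3}=\vect x$; the reverse composition $\vect F\circ\vect I=\mathrm{id}$ is checked the same way. A map with a two-sided inverse is bijective, which settles~(i). I would also record that $\vect F$ and $\vect I$ are manifestly smooth on $\mathbb{R}^3\setminus\{0\}$, so $\vect F$ is in fact a diffeomorphism; differentiating $\vect I\circ\vect F=\mathrm{id}$ shows its Jacobian $D\vect F$ is invertible at every point, a fact I will need below.

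For part~(ii) I would first discard the trivial first component: the top line of~(\ref{eq:fulfilled i}) reads $\frac{d^{n}}{dt^{n}}\dot{\vect q}_{i,t}=\frac{d^{n}}{dt^{n}}\vect v(\vect p_{i,t})$, which holds identically by the velocity constraint~(\ref{eq:velocity}) built into the initial data, so all content lies in the momentum line. Abbreviating $\vect X^{\pm}:=\vect q_{i,t}-\vect q_{j,t_{j}^{\pm}(t,\vect q_{i,t})}$, that line is $\dot{\vect p}_{i,t}=e_{i}e_{j}\bigl[\vect F(\vect X^{+})+\vect F(\vect X^{-})\bigr]$. Solving algebraically for one summand, say $\vect F(\vect X^{+})=\frac{1}{e_{i}e_{j}}\dot{\vect p}_{i,t}-\vect F(\vect X^{-})$, and applying $\vect I=\vect F^{-1}$ from part~(i) to both sides produces exactly the asserted identity with the upper sign; isolating the other summand gives the lower sign. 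Thus for the undifferentiated equations ($n=0$) the equivalence is immediate from~(i), and the statement for charge $j$ follows verbatim after exchanging the roles of $i$ and $j$.

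The step needing care is the passage to all orders $n\geq0$, since $\vect I$ is nonlinear and the higher-order chain rule (Fa\`a di Bruno) generates lower-order terms. I would set $\vect w(t):=\frac{1}{e_{i}e_{j}}\dot{\vect p}_{i,t}-\vect F(\vect X^{-})$ and $\vect z(t):=\vect I(\vect w(t))$, so that the reformulated identity is $\vect X^{+}=\vect z$ while the original momentum identity is $\vect F(\vect X^{+})=\vect w=\vect F(\vect z)$. I would then induct on $n$: if the two forms already agree to all orders $m<n$ at the base point, then $(\vect X^{+})^{(m)}=\vect z^{(m)}$ there for $m<n$, and the higher-order chain rule collapses to $\frac{d^{n}}{dt^{n}}\vect F(\vect X^{+})-\frac{d^{n}}{dt^{n}}\vect F(\vect z)=D\vect F(\vect z)\bigl((\vect X^{+})^{(n)}-\vect z^{(n)}\bigr)$, all lower-order contributions cancelling. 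Because $D\vect F(\vect z)$ is invertible, the $n$-th order momentum identity is equivalent to $(\vect X^{+})^{(n)}=\vect z^{(n)}$, i.e.\ to the $n$-th order reformulated identity, closing the induction for the whole jet. The only genuine obstacle is precisely this bookkeeping of the nonlinear chain-rule terms; the invertibility of $D\vect F$ secured in part~(i) is what makes each order collapse to a single invertible linear relation.
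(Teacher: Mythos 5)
Your proposal is correct and is exactly the ``direct computation'' the paper asserts without writing out: part~(i) by verifying $\vect I\circ\vect F=\vect F\circ\vect I=\mathrm{id}$, and part~(ii) by isolating one summand of the momentum equation and applying $\vect F^{-1}$. Your induction on $n$ using Fa\`a di Bruno and the invertibility of $D\vect F$ to pass from the $n=0$ equivalence to all orders is precisely the detail the paper suppresses, and it is handled correctly.
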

\lemref{key_ingredient} ensures that for example in situations as
depicted in \figref{Nature-of-the} we can compute from $\vect F(\vect q_{i,t}-\vect q_{j,t_{j}^{+}})$,
which is determined by the initial data and (\ref{eq:toy model}),
the space-time point $(t_{j}^{+},\vect q_{j,t_{j}^{+}})$. This is
the key ingredient in our construction:
\begin{proof}[Proof of \thmref{uniquenss}]
As a first step, we construct a smooth extension of $\vect q_{j}^{(0)}$
beyond time $t_{j}^{(1)}$. Let us introduce the short-hand notation
\[
t_{j}^{\pm}=t_{j}^{\pm}(t,\vect q_{i,t}).
\]
In general, any solution $t\mapsto(\vect q_{i,t},\vect p_{i,t})_{i=1,2}$
to (\ref{eq:toy model}) has to fulfill
\begin{equation}
\dot{\vect p}_{i,t}=e_{i}e_{j}\left[\vect F\left(\vect q_{i,t}-\vect q_{j,t_{j}^{+}}\right)+\vect F\left(\vect q_{i,t}-\vect q_{j,t_{j}^{-}}\right)\right].\label{eq:pdot}
\end{equation}
With the help of \lemref{key_ingredient}(i) we can bring this equation
into the form
\begin{equation}
\vect q_{j,t_{j}^{+}}=\vect q_{i,t}-\vect I\left(\frac{1}{e_{i}e_{j}}\dot{\vect p}_{i,t}-\vect F\left(\vect q_{i,t}-\vect q_{j,t_{j}^{\mp}(t,\vect q_{i,t})}\right)\right)\label{eq:inverse equation}
\end{equation}
for times $t\in[T_{i}^{\min},T_{i}^{\max}]$ because then
\[
\left\Vert \vect I\left(\frac{1}{e_{i}e_{j}}\dot{\vect p}_{i,t}-\vect F\left(\vect q_{i,t}-\vect q_{j,t_{j}^{-}}\right)\right)\right\Vert =\left\Vert \vect q_{i,t}-\vect q_{j,t_{j}^{\pm}}\right\Vert \geq d
\]
is guaranteed, and hence, the right-hand side of (\ref{eq:inverse equation})
is well-defined. We now make use of (\ref{eq:inverse equation}) to
compute
\[
\vect q_{j}^{(1)}:t\mapsto\vect q_{j,t}^{(1)}
\]
according to
\begin{eqnarray}
t_{j}^{+} & = & t+\left\Vert \vect I\left(\frac{1}{e_{i}e_{j}}\dot{\vect p}_{i,t}^{(0)}-\vect F\left(\vect q_{i,t}^{(0)}-\vect q_{j,t_{j}^{-}(t,\vect q_{i,t})}^{(0)}\right)\right)\right\Vert ,\label{eq:t j plus}\\
\vect q_{j,t_{j}^{+}}^{(1)} & = & \vect q_{i,t}^{(0)}-\vect I\left(\frac{1}{e_{i}e_{j}}\dot{\vect p}_{i,t}^{(0)}-\vect F\left(\vect q_{i,t}^{(0)}-\vect q_{j,t_{j}^{-}(t,\vect q_{i,t})}^{(0)}\right)\right)\label{eq:q j plus}
\end{eqnarray}
for all $t\in(t_{i}^{(0)},t_{i}^{(1)}]\cap(T_{i}^{\min},T_{i}^{\max})$.
Note that due to (\ref{eq:times}) the right-hand side of (\ref{eq:t j plus})
and (\ref{eq:q j plus}) is well-defined. We define
\[
t_{j}^{(2)}=\min\left\{ t_{j}^{+}|_{t=t_{i}^{(1)}},T_{j}^{\max}\right\} .
\]
Since $\vect q_{i}^{(0)}$ is smooth on $[t_{i}^{(0)},t_{i}^{(1)}]$
also $t_{j}^{+}$ depends smoothly on $t\in[t_{i}^{(0)},t_{i}^{(1)}]$,
and in consequence, $\vect q_{j}^{(1)}$ is smooth on $(t_{j}^{(1)},t_{j}^{(2)}]$.
Furthermore, a direct computation gives 
\begin{equation}
t\mapsto\frac{dt_{j}^{+}}{dt}=\frac{1+\vect n_{j,+}\cdot\dot{\vect q}_{i,t}^{(0)}}{1+\vect n_{j,+}\cdot\dot{\vect q}_{j,t^{+}}^{(1)}},\qquad\vect n_{j,+}:=\frac{\vect{\vect q}_{i,t}^{(0)}-\vect q_{j,t^{+}}^{(1)}}{\left\Vert \vect{\vect q}_{i,t}^{(0)}-\vect q_{j,t^{+}}^{(1)}\right\Vert }.\label{eq:d_delay}
\end{equation}
so that, using the notation 
\[
\frac{d}{dt_{j}^{+}}=\frac{dt}{dt_{j}^{+}}\frac{d}{dt},
\]
we may then compute 
\begin{eqnarray}
\lim_{s\searrow t_{j}^{(1)}}\frac{d^{n}}{ds^{n}}\vect q_{j,s}^{(1)} & = & \lim_{t\searrow t_{i}^{(0)}}\frac{d^{n}}{dt_{j}^{+n}}\vect q_{j,t_{j}^{+}}^{(1)}=\lim_{t\searrow t_{i}^{(0)}}\frac{d^{n}}{dt_{j}^{+n}}\left[\vect q_{i,t}^{(0)}-\vect I\left(\frac{1}{e_{i}e_{j}}\dot{\vect p}_{i,t}^{(0)}-\vect F\left(\vect q_{i,t}^{(0)}-\vect q_{j,t_{j}^{-}(t,\vect q_{i,t})}^{(1)}\right)\right)\right]\label{eq:check smoothness}
\end{eqnarray}
for every integer $n\geq0$. \lemref{key_ingredient}(ii) ensures
that
\[
(\ref{eq:check smoothness})=\frac{d^{n}}{ds^{n}}\vect q_{j,s}^{(0)}\big|_{s=t_{j}^{(1)}},
\]
and hence,
\[
t\mapsto\begin{cases}
\vect q_{j,t}^{(0)} & \text{for }t\in[t_{j}^{(0)},t_{j}^{(1)}]\\
\vect q_{j,t}^{(1)} & \text{for }t\in(t_{j}^{(1)},t_{j}^{(2)}]
\end{cases}
\]
is a smooth map on $[t_{j}^{(0)},t_{j}^{(2)})$. Furthermore, 
\begin{equation}
\vect p_{j,t}^{(1)}:=m\frac{\dot{\vect q}_{j,t}}{\sqrt{1-\dot{\vect q}_{j,t}^{2}}},\qquad\forall\, t\in(t_{j}^{(1)},t_{j}^{(2)}].\label{eq:p j 1}
\end{equation}
is well-defined by (\ref{eq:no singularity}).

In the second step, we use the analogous construction to extend $\vect q_{i}^{(0)}$
smoothly beyond time $t_{i}^{(1)}$: We define
\[
\vect q_{i}^{(1)}:t\mapsto\vect q_{i,t}^{(1)}
\]
by
\begin{eqnarray}
t_{i}^{+} & = & t+\left\Vert \vect q_{j,t}^{(1)}-\vect I\left(\frac{1}{e_{j}e_{i}}\dot{\vect p}_{j,t}^{(1)}-\vect F\left(\vect q_{j,t}^{(1)}-\vect q_{i,t_{i}^{-}(t,\vect q_{j,t})}^{(0)}\right)\right)\right\Vert ,\label{eq:t i plus}\\
\vect q_{i,t_{i}^{+}}^{(1)} & = & \vect q_{j,t}^{(1)}-\vect I\left(\frac{1}{e_{j}e_{i}}\dot{\vect p}_{j,t}^{(1)}-\vect F\left(\vect q_{j,t}^{(1)}-\vect q_{i,t_{i}^{-}(t,\vect q_{j,t})}^{(0)}\right)\right)\label{eq:q i plus}
\end{eqnarray}
for all $t\in(t_{j}^{(1)},t_{j}^{(2)}]\cap[T_{j}^{\min},T_{j}^{\max}]$
and furthermore
\[
t_{i}^{(2)}=\min\left\{ t_{i}^{+}|_{t=t_{j}^{(2)}},T_{i}^{\max}\right\} .
\]
As in the first step one finds that 
\[
t\mapsto\begin{cases}
\vect q_{i,t}^{(0)} & \text{for }t\in[t_{i}^{(0)},t_{i}^{(1)}]\\
\vect q_{i,t}^{(1)} & \text{for }t\in(t_{i}^{(1)},t_{i}^{(2)}]
\end{cases}
\]
is smooth for $t\in[t_{i}^{(0)},t_{i}^{(2)}]$. Finally, due to (\ref{eq:no singularity})
we can define
\begin{equation}
\vect p_{i,t}^{(1)}:=m\frac{\dot{\vect q}_{i,t}}{\sqrt{1-\dot{\vect q}_{i,t}^{2}}},\qquad\forall\, t\in(t_{i}^{(1)},t_{i}^{(2)}].\label{eq:p i 1}
\end{equation}

In consequence, the maps
\[
t\mapsto(\vect q_{i,t},\vect p_{i,t}):=\begin{cases}
(\vect q_{i,t}^{(0)},\vect p_{i,t}^{(0)}) & \text{for }t\in[t_{i}^{(0)},t_{i}^{(1)}]\\
(\vect q_{i,t}^{(0)},\vect p_{i,t}^{(0)}) & \text{for }t\in(t_{i}^{(1)},t_{i}^{(2)}]
\end{cases}
\]
for $i=1,2$ are smooth, and by virtue of definitions (\ref{eq:p j 1}),(\ref{eq:p i 1})
and (\ref{eq:q j plus}), (\ref{eq:q i plus}) and (\ref{eq:fulfilled i}),(\ref{eq:fulfilled j})
of (\ref{lem:key_ingredient}) they fulfill
\begin{equation}
\begin{pmatrix}\vect q_{i,t}\\
\vect p_{i,t}
\end{pmatrix}=\begin{pmatrix}\vect v(\vect p_{i,t})\\
e_{i}e_{j}\left[\vect F\left(\vect q_{i,t}-\vect q_{j,t_{j}^{+}}\right)+\vect F\left(\vect q_{i,t}-\vect q_{j,t_{j}^{-}}\right)\right]
\end{pmatrix}\label{eq:toy model i}
\end{equation}
for $t\in[t_{i}^{(0)},t_{i}^{-}(t_{j}^{(2)},\vect q_{j,t_{j}^{(2)}})]$
and\foreignlanguage{english}{
\begin{equation}
\begin{pmatrix}\vect q_{j,t}\\
\vect p_{j,t}
\end{pmatrix}=\begin{pmatrix}\vect v(\vect p_{j,t})\\
e_{j}e_{i}\left[\vect F\left(\vect q_{j,t}-\vect q_{i,t_{i}^{+}}\right)+\vect F\left(\vect q_{j,t}-\vect q_{i,t_{i}^{-}}\right)\right]
\end{pmatrix}\label{eq:toy model j}
\end{equation}
}for $t\in[t_{j}^{(1)},t_{j}^{-}(t_{i}^{(2)},\vect q_{i,t_{i}^{(2)}})].$\\

This construction can be repeated where in the $k$th step one constructs
the extension
\[
(t_{i}^{(k)},t_{i}^{(k+1)}]:t\mapsto(\vect q_{i,t}^{(k)},\vect p_{i,t}^{(k)})_{i=1,2}.
\]
For each step one finds
\[
t_{i}^{(k+1)}\geq\min\left\{ t^{(k)}+d,T_{i}^{\max}\right\} ,\qquad i\in\{1,2\}.
\]
In consequence, only finite repetitions of this construction are needed
to compute
\begin{equation}
[t_{i}^{(0)},T_{i}^{\max}]:t\mapsto(\vect q_{i,t},\vect p_{i,t}):=\begin{cases}
(\vect q_{i,t}^{(0)},\vect p_{i,t}^{(0)}) & \text{for }t\in[t_{i}^{(0)},t_{i}^{(1)}]\\
(\vect q_{i,t}^{(k)},\vect p_{i,t}^{(k)}) & \text{for }t\in(t_{i}^{(k)},t_{i}^{(k+1)}]
\end{cases}\label{eq:steps}
\end{equation}
which fulfills (\ref{eq:toy model i}) for $t\in[t_{i}^{(0)},t_{i}^{-}(T_{j}^{\max},\vect q_{j,T_{j}^{\max}})]$
and (\ref{eq:toy model j}) for $t\in[t_{j}^{(1)},t_{j}^{-}(T_{i}^{\max},\vect q_{i,T_{i}^{\max}})]$.
The same construction can be carried out into the past which results
in smooth maps
\[
[T_{i}^{\min},T_{i}^{\max}]\to\mathbb{R}^{3}\times\mathbb{R}^{3},\qquad t\mapsto(\vect q_{i,t},\vect p_{i,t}),\qquad i\in\{1,2\}.
\]

From this construction we infer the claims of \thmref{uniquenss}:
Claim (i) follows from definition (\ref{eq:steps}). Furthermore,
due to \lemref{key_ingredient}, (\ref{eq:t j plus})-(\ref{eq:q j plus})
and (\ref{eq:t i plus})-(\ref{eq:q i plus}) the map $t\mapsto(\vect q_{i,t},\vect p_{i,t})$
fulfills (\ref{eq:toy model}) for times 
\[
t\in[t_{i}^{+}(T_{j}^{\min},\vect q_{j,T_{j}^{\min}}),t_{i}^{-}(T_{j}^{\max},\vect q_{j,T_{j}^{\max}})]
\]
for $i\in\{1,2\}$ and $j\neq i$ and therefore claim (ii). Finally,
\lemref{key_ingredient} guarantees that this constructed solution
is unique which proves claim (iii) and concludes the proof.
\end{proof}
As a byproduct we observe that specification of initial positions
and momenta of the charges as suggested by the work \cite{driver_canfuture_1979}
does not always ensure uniqueness:
\begin{cor}
\label{cor:newtonian}Let $\left(\vect q_{i}^{(0)},\vect p_{i}^{(0)}\right)_{i=1,2}$
be initial data such that there is a $t^{*}\in(t_{i}^{(0)},t_{i}^{(1)})\cap(t_{j}^{(0)},t_{j}^{(1)})$,
let $D_{i}\to\mathbb{R}^{3}\times\mathbb{R}^{3},t\mapsto(\vect q_{i,t},\vect p_{i,t})$
for $i\in\{1,2\}$ be the corresponding solution to (\ref{eq:toy model}),
and let $\vect q_{i}^{(0)},\vect p_{i}^{(0)}\in\mathbb{R}^{3}$ for
$i=1,2$ be defined as
\[
(\vect q_{i}^{(0)},\vect p_{i}^{(0)}):=(\vect q_{i,t},\vect p_{i,t})\big|_{t=t^{*}},\qquad i\in\{1,2\}.
\]
There are uncountably many other solutions $\widetilde{D}_{i}\to\mathbb{R}^{3}\times\mathbb{R}^{3},t\mapsto(\widetilde{\vect q}_{i,t},\widetilde{\vect p}_{i,t})$
for $i=1,2$ to (\ref{eq:toy model}) which fulfill
\begin{equation}
(\widetilde{\vect q}_{i,t},\widetilde{\vect p}_{i,t})\big|_{t=t^{*}}=(\vect q_{i}^{(0)},\vect p_{i}^{(0)}),\qquad i\in\{1,2\},\label{eq:newtonian}
\end{equation}
but not
\begin{equation}
(\widetilde{\vect q}_{i,t},\widetilde{\vect p}_{i,t})=(\vect q_{i,t},\vect p_{i,t})\qquad\forall\, t\in D_{i}\cap\widetilde{D}_{i},\quad i\in\{1,2\}.\label{eq:equality}
\end{equation}
\end{cor}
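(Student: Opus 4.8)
The plan is to produce the required family by perturbing one of the initial strips in its interior, away from $t^{*}$, while leaving the complete jet at the strip endpoints untouched; by \thmref{uniquenss} each perturbed strip extends to a unique solution, and these solutions will share the instantaneous datum at $t^{*}$ yet already differ on the initial interval. First I would fix the given solution $t\mapsto(\vect q_{i,t},\vect p_{i,t})$, choose a unit vector $\vect e\in\mathbb{R}^{3}$ and a point $t_{0}\in(t_{i}^{(0)},t^{*})$, and pick a smooth bump $\varphi\geq 0$ with compact support in a closed subinterval of $(t_{i}^{(0)},t^{*})$ and with $\varphi(t_{0})>0$. For $\lambda$ in a small interval $(0,\varepsilon)$ I set $\vect q_{i,t}^{\lambda}:=\vect q_{i,t}+\lambda\,\varphi(t)\,\vect e$ on $[t_{i}^{(0)},t_{i}^{(1)}]$, define $\vect p_{i,t}^{\lambda}$ through the velocity relation (\ref{eq:velocity}), and leave the strip of charge $j$ unchanged.

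Next I would verify that $(\vect q_{i}^{\lambda},\vect p_{i}^{\lambda})$ together with the unchanged strip of charge $j$ is again admissible initial data in the sense of \defref{initial_data} for all sufficiently small $\lambda$. Conditions (i) and (ii) are immediate: on the compact overlap the original separation and subluminality are strict, hence survive a small $C^{1}$ perturbation, and (ii) holds by construction. Condition (iii) is untouched because $\varphi$ vanishes together with all its derivatives at both endpoints $t_{i}^{(0)},t_{i}^{(1)}$, so the endpoint space-time points, and with them all light-cone relations among the four endpoints, are unchanged.

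The step I expect to be the main obstacle is showing that the all-orders matching conditions (iv) and (v) are preserved. The argument I would give is the following: by the light-cone relations (\ref{eq:times}), the delayed times occurring in (iv) at $t=t_{i}^{(0)}$ land exactly on the endpoints $t_{j}^{(0)},t_{j}^{(1)}$ of the unperturbed strip of charge $j$, and dually the delayed times in (v) at $t=t_{j}^{(1)}$ land on $t_{i}^{(0)},t_{i}^{(1)}$. Expanding $\tfrac{d^{n}}{dt^{n}}$ of the right-hand sides by the chain rule, every term that appears is assembled solely from the jets of $\vect q_{i}$ at $t_{i}^{(0)},t_{i}^{(1)}$ and of $\vect q_{j}$ at $t_{j}^{(0)},t_{j}^{(1)}$; in particular the delayed-time derivatives $\tfrac{d^{k}t_{j}^{\pm}}{dt^{k}}$ obtained from (\ref{eq:d_delay}) depend only on these same endpoint jets. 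Since $\varphi$ is flat to infinite order at $t_{i}^{(0)}$ and at $t_{i}^{(1)}$, all of these jets coincide with those of the original data, so (iv) and (v) continue to hold verbatim; the conditions for charge $j$ are similarly undisturbed, as they involve charge $i$ only through its endpoint jets.

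With admissibility in hand, \thmref{uniquenss} furnishes for each $\lambda$ a unique smooth solution $t\mapsto(\widetilde{\vect q}_{i,t}^{\lambda},\widetilde{\vect p}_{i,t}^{\lambda})$ that agrees with the perturbed strip on $[t_{i}^{(0)},t_{i}^{(1)}]$ by claim (i) of that theorem. Because $\varphi\equiv 0$ in a neighborhood of $t^{*}$, each such solution coincides with the original at $t^{*}$, so (\ref{eq:newtonian}) holds; the same is automatic for charge $j$, whose strip value at $t^{*}\in(t_{j}^{(0)},t_{j}^{(1)})$ was never changed. On the other hand, at the interior point $t_{0}\in[t_{i}^{(0)},t_{i}^{(1)}]\subseteq D_{i}\cap\widetilde{D}_{i}^{\lambda}$ one has $\widetilde{\vect q}_{i,t_{0}}^{\lambda}=\vect q_{i,t_{0}}+\lambda\varphi(t_{0})\vect e\neq\vect q_{i,t_{0}}$, so (\ref{eq:equality}) fails. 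Finally, distinct parameters $\lambda\neq\lambda'$ give strips that differ at $t_{0}$ and hence, again by claim (i), distinct solutions; letting $\lambda$ range over $(0,\varepsilon)$ then produces uncountably many solutions of (\ref{eq:toy model}) with the asserted properties, which is the claim of the corollary.
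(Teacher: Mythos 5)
Your proposal is correct and follows essentially the same route as the paper's proof: perturb the initial strip by a smooth bump supported away from $t^{*}$ and flat at the strip endpoints, check that the perturbed strips are still admissible initial data, and invoke \thmref{uniquenss} to obtain an uncountable family of solutions agreeing with the original at $t^{*}$ but differing on the initial strip. The only (cosmetic) differences are that the paper perturbs both strips inside $(t_{i}^{(0)},t_{i}^{(1)})\cap(t_{j}^{(0)},t_{j}^{(1)})$ and merely asserts admissibility, whereas you perturb one strip and verify conditions (iv)--(v) of \defref{initial_data} explicitly via the endpoint-jet argument.
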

\begin{proof}
Choose $s\in\mathbb{R}$ and $\delta>0$ such that $(s-\delta,s+\delta)\subset(t_{i}^{(0)},t_{i}^{(1)})\cap(t_{j}^{(0)},t_{j}^{(1)})$
and \foreignlanguage{english}{$t^{*}\notin(s-\delta,s+\delta)$}.
Furthermore, for $\lambda>0$ and $i=1,2$ let $\vect d_{\lambda,i}:\mathbb{R}\to\mathbb{R}^{3}$,
$t\mapsto\vect d_{\lambda,i}(t)$ be a smooth function such that
\[
\mathrm{supp}\,\vect d_{\lambda,i}=[s-\delta,s+\delta],\qquad\sup_{t\in\mathbb{R}}|\dot{\vect d}_{\lambda,i}(t)|=\lambda.
\]
We define
\[
t\mapsto\widetilde{\vect q}_{i,t}^{(0)}:=\vect q_{i,t}^{(0)}+\vect d_{\lambda,i}(t),\qquad\forall\, t\in[t_{i}^{(0)},t_{i}^{(1)}],
\]
choose the parameter $\lambda>0$ such that $\left\Vert \dot{\widetilde{\vect q}}_{i,t}\right\Vert <1$,
and define
\[
t\mapsto\widetilde{\vect p}_{i,t}^{(0)}:=\frac{m_{i}\dot{\widetilde{\vect q}}_{i,t}}{\sqrt{1-\left\Vert \dot{\widetilde{\vect q}}_{i,t}\right\Vert ^{2}}},\qquad\forall\, t\in[t_{i}^{(0)},t_{i}^{(1)}].
\]
The maps $(\widetilde{\vect q}_{i}^{(0)},\widetilde{\vect p}_{i}^{(0)})_{i=1,2}$
are initial data according to \defref{initial_data} which fulfill
(\ref{eq:newtonian}). However, according to \thmref{uniquenss} the
solution \foreignlanguage{english}{$\widetilde{D}_{i}\to\left(\mathbb{R}^{3}\times\mathbb{R}^{3}\right),t\mapsto(\widetilde{\vect q}_{i,t},\widetilde{\vect p}_{i,t})$}
for $i=1,2$ to (\ref{eq:toy model}) corresponding to $(\widetilde{\vect q}_{i}^{(0)},\widetilde{\vect p}_{i}^{(0)})_{i=1,2}$
does not fulfill (\ref{eq:equality}). Note that there are uncountably
many choices, e.g. in $s,\delta,\lambda$ and $\vect d_{\lambda,i}$,
to define other $(\widetilde{\vect q}_{i}^{(0)},\widetilde{\vect p}_{i}^{(0)})_{i=1,2}$
such that (\ref{eq:newtonian}) holds. None of the corresponding solutions
however fulfill (\ref{eq:equality}).
\end{proof}

\section{Constants of Motion\label{sec:Constants-of-Motion}}

In the following we define an energy functional for the approximate
model from which the general structure of constants of motion will
become apparent. Throughout this section we consider a solution 
\[
\mathbb{R}\supseteq D_{i}\to\mathbb{R}^{3}\times\mathbb{R}^{3},\qquad t\mapsto(\vect q_{i,t},\vect p_{i,t}),\qquad i\in\{1,2\},
\]
to (\ref{eq:toy model}); see \thmref{uniquenss}.
\begin{defn}
\label{def:energy} We define a map $H:D_{1}\times D_{2}\to\mathbb{R}^{+}$
by 
\begin{eqnarray*}
H(t_{1},t_{2}): & = & \sum_{i=1}^{2}\sqrt{\vect p_{i,t_{i}}^{2}+m_{i}^{2}}+\frac{1}{2}\sum_{i=1}^{2}e_{i}\sum_{j\neq i}e_{j}\sum_{\pm}\frac{1}{\left\Vert \vect q_{i,t_{i}}-\vect q_{j,t_{j}^{\pm}(t_{i})}\right\Vert }\\
 &  & +\frac{1}{2}\sum_{i=1}^{2}e_{i}\sum_{j\neq i}e_{j}\sum_{\pm}\int_{t_{i}}^{t_{i}^{\pm}(t_{j})}ds\,\vect F(\vect q_{i,s}-\vect q_{j,t_{j}^{\pm}(s)})\cdot\dot{\vect q}_{i,s}
\end{eqnarray*}
where for the delay functions we have used the short-hand notation
\[
t_{i}^{\pm}(t)\equiv t_{i}^{\pm}(t,\vect q_{j,t}),\qquad i\in\{1,2\},\quad j\neq i.
\]
We refer to $H$ as the \emph{energy functional of the system}. See
\figref{energy} for an example of which data is needed to define
this functional.

\begin{figure}
\begin{centering}
\hfill{}\includegraphics{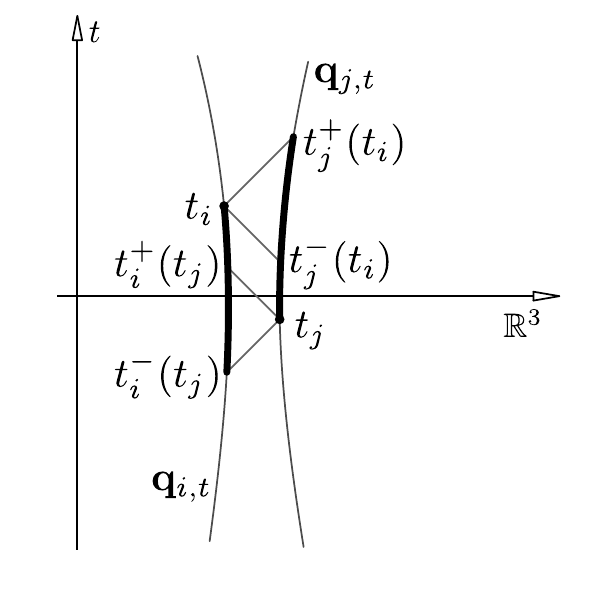}\hfill{}
\par\end{centering}

\caption{\label{fig:energy}Given $t_{i},t_{j}$ the thick lines denote the
data required to define the energy of the system.}

\end{figure}
\end{defn}
\begin{thm}
\label{thm:energy}For all $t_{1}\in D_{1}$ and \foreignlanguage{english}{\textup{$t_{2}\in D_{2}$}}
the equality
\[
H(t_{1},t_{2})=H(0,0)
\]
holds true.\end{thm}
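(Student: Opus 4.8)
The plan is to show that $H$ is constant on $D_1\times D_2$, which is equivalent to the two identities $\partial_{t_1}H\equiv 0$ and $\partial_{t_2}H\equiv 0$; the asserted equality $H(t_1,t_2)=H(0,0)$ then follows at once. Since $H$ is manifestly symmetric under the exchange of the two particle labels, it suffices to establish $\partial_{t_1}H=0$ and to obtain $\partial_{t_2}H=0$ by relabeling $1\leftrightarrow 2$. The first observation is that only a few pieces of $H$ actually depend on $t_1$: the rest-energy $\sqrt{\vect p_{1,t_1}^2+m_1^2}$; the Coulomb term with $(i,j)=(1,2)$ (the one with $(i,j)=(2,1)$ is a function of $t_2$ alone, since its delayed time $t_1^\pm(t_2)$ is fixed by $t_2$); the lower endpoint $t_1$ of the $(1,2)$ integral; and the upper endpoint $t_2^\pm(t_1)$ of the $(2,1)$ integral.

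First I would differentiate the rest-energy. Using $\tfrac{d}{dt_1}\sqrt{\vect p_{1,t_1}^2+m_1^2}=\vect v(\vect p_{1,t_1})\cdot\dot{\vect p}_{1,t_1}=\dot{\vect q}_{1,t_1}\cdot\dot{\vect p}_{1,t_1}$ together with the equation of motion \eqref{eq:toy model}, this contributes $e_1e_2\sum_{\pm}\vect F(\vect r_\pm)\cdot\dot{\vect q}_{1,t_1}$, where $\vect r_\pm:=\vect q_{1,t_1}-\vect q_{2,t_2^\pm(t_1)}$. Next I would differentiate the $(1,2)$ Coulomb term $\tfrac12 e_1e_2\sum_\pm\|\vect r_\pm\|^{-1}$; the only nontrivial input here is the delay-rate identity, obtained exactly as in \eqref{eq:d_delay}, namely $\tfrac{d}{dt_1}t_2^\pm(t_1)=(1\pm\vect n_\pm\cdot\dot{\vect q}_{1,t_1})/(1\pm\vect n_\pm\cdot\dot{\vect q}_{2,t_2^\pm})$ with $\vect n_\pm:=\vect r_\pm/\|\vect r_\pm\|$. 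Finally, the lower endpoint of the $(1,2)$ integral contributes, by the fundamental theorem of calculus, $-\tfrac12 e_1e_2\sum_\pm\vect F(\vect r_\pm)\cdot\dot{\vect q}_{1,t_1}$, cancelling exactly half of the rest-energy contribution. A short algebraic simplification (writing $a_\pm:=\vect n_\pm\cdot\dot{\vect q}_{1,t_1}$, $b_\pm:=\vect n_\pm\cdot\dot{\vect q}_{2,t_2^\pm}$ and using $\vect F(\vect r_\pm)\cdot\vect w=(\vect n_\pm\cdot\vect w)\,\|\vect r_\pm\|^{-2}$, via the explicit form \eqref{eq:force field}) collapses these three contributions into the single expression $\tfrac12 e_1e_2\sum_\pm\vect F(\vect r_\pm)\cdot\dot{\vect q}_{2,t_2^\pm}\,\tfrac{d}{dt_1}t_2^\pm(t_1)$.

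It then remains to show that the upper-endpoint contribution of the $(2,1)$ integral cancels this last expression. Differentiating the $(2,1)$ integral with respect to $t_1$ produces its integrand evaluated at $s=t_2^\pm(t_1)$ times $\tfrac{d}{dt_1}t_2^\pm(t_1)$. The crux is the light-cone reciprocity: since $(t_2^\pm(t_1),\vect q_{2,t_2^\pm(t_1)})$ lies on the future/past light-cone of $(t_1,\vect q_{1,t_1})$, the point $(t_1,\vect q_{1,t_1})$ conversely lies on the past/future light-cone of $(t_2^\pm(t_1),\vect q_{2,t_2^\pm(t_1)})$, so that the relevant delayed position of particle $1$ seen from the upper endpoint is exactly $\vect q_{1,t_1}$. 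Hence the endpoint integrand equals $\vect F(\vect q_{2,t_2^\pm}-\vect q_{1,t_1})\cdot\dot{\vect q}_{2,t_2^\pm}=-\vect F(\vect r_\pm)\cdot\dot{\vect q}_{2,t_2^\pm}$, and this term cancels the expression from the previous paragraph. Therefore $\partial_{t_1}H=0$, and by the $1\leftrightarrow 2$ symmetry $\partial_{t_2}H=0$ as well.

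The main obstacle I anticipate is precisely the bookkeeping in this last step: one must track the advanced/retarded labels carefully at the integral endpoints and verify that the reciprocity pairs the $\pm$-endpoint of the $(2,1)$ integral with the correct signed Coulomb term, i.e. that the light-cone relation turns the same-labelled delay into the opposite one, $t_1^{\mp}(t_2^{\pm}(t_1))=t_1$. All remaining steps are routine differentiations relying only on \eqref{eq:d_delay} and \eqref{eq:force field}; the conceptual point is that the integral term in $H$ exists exactly to supply, through its two moving endpoints, the boundary pieces that absorb the delay-rate factors generated by differentiating the delayed Coulomb potential, so that kinetic, potential and "energy-in-transit" contributions conspire to a vanishing total derivative.
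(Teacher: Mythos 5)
Your proof is correct in substance, but it is a genuinely different route from the paper's. You differentiate $H$ directly and show $\partial_{t_1}H=\partial_{t_2}H=0$; the paper explicitly declines to do this (calling it ``straight-forward'') and instead gives a constructive derivation: it starts from the kinetic-energy integral $\sum_i\int_0^{t_i}\dot{\vect p}_{i,s}\cdot\vect v(\vect p_{i,s})\,ds$, splits the resulting force integral into three halves, integrates the exact differential to produce the Coulomb terms, and uses the substitution $s\mapsto t_j^{\pm}(s)$ together with the antisymmetry of $\vect F$ and the reciprocity (\ref{eq:sym_delay}) to reassemble the remainder into the ``energy-in-transit'' integrals plus a constant. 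The paper's route \emph{discovers} the form of $H$ and is meant to generalize to other WF-type constants of motion; yours \emph{verifies} a given $H$ and is shorter, but presupposes that the functional is already known. Your identification of the three cancelling contributions (kinetic term via the equation of motion, delayed Coulomb term via the chain rule, and the two moving endpoints of the integral term) is exactly right, and your intermediate collapse to $\tfrac12 e_1e_2\sum_\pm\vect F(\vect r_\pm)\cdot\dot{\vect q}_{2,t_2^\pm}\,\tfrac{d}{dt_1}t_2^\pm(t_1)$ checks out; note that you never actually need the explicit value (\ref{eq:d_delay}) of the delay rate, only that it appears symmetrically in the Coulomb and endpoint contributions.

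One caveat, which is really a discrepancy in the paper rather than in your argument: the cancellation at the upper endpoint of the $(2,1)$ integral requires, as you say, that the delayed time appearing in the endpoint integrand be $t_1$ itself, i.e.\ $t_1^{\mp}(t_2^{\pm}(t_1))=t_1$. This holds only if the upper limit of the $\pm$-summand of the integral term is $t_i^{\mp}(t_j)$, i.e.\ carries the label \emph{opposite} to the $t_j^{\pm}(s)$ in the integrand. That opposite-label pairing is precisely what the final display of the paper's own proof produces ($\int_{t_i}^{t_i^{\mp}(t_j)}ds\,\vect F(\vect q_{i,s}-\vect q_{j,t_j^{\pm}(s)})\cdot\dot{\vect q}_{i,s}$), but \defref{energy} as printed pairs the \emph{same} labels; with that literal reading the two terms you need to cancel involve $\vect q_{1}$ at $t_1^{\pm}(t_2^{\pm}(t_1))\neq t_1$ and the derivative does not vanish. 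You flagged exactly this bookkeeping as the delicate point and implicitly resolved it in the direction that makes $H$ conserved, so your proof is sound; just state explicitly that you are using the opposite-label convention for the integration limits.
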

\begin{proof}
Since we assume a smooth solution $t\mapsto(\vect q_{i,t},\vect p_{i,t})_{i=1,2}$
to (\ref{eq:toy model}) it is straight-forward to verify the claim
directly by differentiating with respect to $t_{1}$ and $t_{2}$.
Here, however, we want to provide a general idea how to find constants
of motion for a WF-type delay differential equations, and therefore
give a more instructive proof:\\

We start with the sum of the kinetic energy difference between times
$0$ and $t_{i}$ of the two particles $i=1,2$, that is
\begin{equation}
\sum_{i=1}^{2}\int_{0}^{t_{i}}ds\,\dot{\vect p}_{i,s}\cdot\vect v(\vect p_{i,s})\label{eq:kinetic}
\end{equation}
and make use of the equations (\ref{eq:toy model}) to express this
entity in terms of
\begin{equation}
(\ref{eq:kinetic})=\sum_{i=1}^{2}e_{i}\sum_{j\neq i}e_{j}\sum_{\pm}\int_{0}^{t_{i}}ds\,\vect F(\vect q_{i,s}-\vect q_{j,t_{j}^{\pm}(s)})\cdot\dot{\vect q}_{i,s}.\label{eq:kinetic2}
\end{equation}
It is convenient to split (\ref{eq:kinetic2}) into the following
summands:
\begin{eqnarray}
(\ref{eq:kinetic2}) & = & \frac{1}{2}\sum_{i=1}^{2}e_{i}\sum_{j\neq i}e_{j}\sum_{\pm}\int_{0}^{t_{i}}ds\,\vect F(\vect q_{i,s}-\vect q_{j,t_{j}^{\pm}(s)})\cdot\dot{\vect q}_{i,s}\label{eq:cancel_this}\\
 &  & +\frac{1}{2}\sum_{i=1}^{2}e_{i}\sum_{j\neq i}e_{j}\sum_{\pm}\int_{0}^{t_{i}}ds\,\vect F(\vect q_{i,s}-\vect q_{j,t_{j}^{\pm}(s)})\cdot\left(\dot{\vect q}_{i,s}-\dot{\vect q}_{j,t_{j}^{\pm}(s)}\frac{dt_{j}^{\pm}(s)}{ds}\right)\label{eq:potential}\\
 &  & +\frac{1}{2}\sum_{i=1}^{2}e_{i}\sum_{j\neq i}e_{j}\sum_{\pm}\int_{0}^{t_{i}}ds\,\vect F(\vect q_{i,s}-\vect q_{j,t_{j}^{\pm}(s)})\cdot\dot{\vect q}_{j,t_{j}^{\pm}(s)}\frac{dt_{j}^{\pm}(s)}{ds}.\label{eq:functional}
\end{eqnarray}
The integrand in (\ref{eq:potential}) is an exact differential so
that
\begin{equation}
(\ref{eq:potential})=-\frac{1}{2}\sum_{i=1}^{2}e_{i}\sum_{j\neq i}e_{j}\sum_{\pm}\frac{1}{\left\Vert \vect q_{i,t_{i}}-\vect q_{j,t_{j}^{\pm}(t_{i})}\right\Vert }+C\label{eq:pot_int}
\end{equation}
where $C\in\mathbb{R}$ is a constant. Next, we exploit the symmetries
of the force field and the delay function
\begin{equation}
\vect F(\vect x)=-\vect F(\vect x),\qquad\forall\,\vect x\in\mathbb{R}^{3}\setminus\{0\},\label{eq:sym_force}
\end{equation}
\begin{equation}
t=t_{i}^{\mp}\left(t_{j}^{\pm}(t)\right),\qquad i\in\{1,2\},\quad j\neq i.\label{eq:sym_delay}
\end{equation}
Now (\ref{eq:sym_delay}) allows to rewrite (\ref{eq:functional})
by substitution of the integration variable according to
\begin{equation}
(\ref{eq:functional})=\frac{1}{2}\sum_{i=1}^{2}e_{i}\sum_{j\neq i}e_{j}\sum_{\pm}\int_{t_{j}^{\pm}(0)}^{t_{j}^{\pm}(t_{i})}ds\,\vect F(\vect q_{i,t_{i}^{\mp}(s)}-\vect q_{j,s})\cdot\dot{\vect q}_{j,s}.\label{eq:subst}
\end{equation}
Furthermore, we apply (\ref{eq:sym_force}) and after that relabel
the indices $i\leftrightarrows j$ and $\pm\leftrightarrows\mp$ to
get
\begin{eqnarray}
(\ref{eq:subst}) & =- & \frac{1}{2}\sum_{i=1}^{2}e_{i}\sum_{j\neq i}e_{j}\sum_{\pm}\int_{t_{j}^{\pm}(0)}^{t_{j}^{\pm}(t_{i})}ds\,\vect F(\vect q_{j,s}-\vect q_{i,t_{i}^{\mp}(s)})\cdot\dot{\vect q}_{j,s}\nonumber \\
 & =- & \frac{1}{2}\sum_{i=1}^{2}e_{i}\sum_{j\neq i}e_{j}\sum_{\pm}\int_{t_{i}^{\mp}(0)}^{t_{i}^{\mp}(t_{j})}ds\,\vect F(\vect q_{i,s}-\vect q_{j,t_{j}^{\pm}(s)})\cdot\dot{\vect q}_{i,s}\nonumber \\
 & =- & \frac{1}{2}\sum_{i=1}^{2}e_{i}\sum_{j\neq i}e_{j}\sum_{\pm}\int_{t_{i}^{\pm}(0)}^{0}ds\,\vect F(\vect q_{i,s}-\vect q_{j,t_{j}^{\pm}(s)})\cdot\dot{\vect q}_{i,s}\label{eq:just_const}\\
 &  & -\frac{1}{2}\sum_{i=1}^{2}e_{i}\sum_{j\neq i}e_{j}\sum_{\pm}\int_{0}^{t_{i}}ds\,\vect F(\vect q_{i,s}-\vect q_{j,t_{j}^{\pm}(s)})\cdot\dot{\vect q}_{i,s}\label{eq:cancel}\\
 &  & -\frac{1}{2}\sum_{i=1}^{2}e_{i}\sum_{j\neq i}e_{j}\sum_{\pm}\int_{t_{i}}^{t_{i}^{\mp}(t_{j})}ds\,\vect F(\vect q_{i,s}-\vect q_{j,t_{j}^{\pm}(s)})\cdot\dot{\vect q}_{i,s}\label{eq:funct_term}
\end{eqnarray}
Noting that term (\ref{eq:just_const}) is just another constant and
term (\ref{eq:cancel}) cancels the first term on the right-hand side
of (\ref{eq:cancel_this}), the kinetic energy can be written as
\[
(\ref{eq:kinetic})=-\frac{1}{2}\sum_{i=1}^{2}e_{i}\sum_{j\neq i}e_{j}\sum_{\pm}\frac{1}{\left\Vert \vect q_{i,t_{i}}-\vect q_{j,t_{j}^{\pm}(t_{i})}\right\Vert }-\frac{1}{2}\sum_{i=1}^{2}e_{i}\sum_{j\neq i}e_{j}\sum_{\pm}\int_{t_{i}}^{t_{i}^{\mp}(t_{j})}ds\,\vect F(\vect q_{i,s}-\vect q_{j,t_{j}^{\pm}(s)})\cdot\dot{\vect q}_{i,s}+C
\]
which proves the claim.
\end{proof}

\section{Stable Orbits\label{sec:Stable-Orbits}}

By the symmetry of the advanced and retarded delay it is straight-forward
to construct stable orbits (compare \cite{schild_electromagnetic_1963}):
\begin{defn}
\label{def:schild}For masses $m_{1},m_{2}>0$, charges $e_{1},e_{2}\in\mathbb{R}$,
radii $r_{1},r_{2}>0$, and angular velocity $\omega\in\mathbb{R}\setminus\{0\}$
such that
\[
e_{1}e_{2}<0,\qquad\frac{m_{1}}{m_{2}}=\frac{\gamma(\omega r_{2})}{\gamma(\omega r_{1})}\frac{r_{2}}{r_{1}},\qquad0<\omega\Delta T<\frac{\pi}{2},\qquad\Delta T:=\sqrt{\frac{r_{1}^{2}+r_{2}^{2}}{1+\frac{r_{1}r_{2}}{e_{1}e_{2}}m_{1}\gamma(\omega r_{1})r_{1}\omega^{2}}},
\]
where
\[
\gamma(v):=\frac{1}{\sqrt{1-v^{2}}},
\]
we define the particle trajectories
\begin{equation}
t\mapsto\vect q_{1,t}:=r_{1}\begin{pmatrix}\cos\omega t\\
\sin\omega t
\end{pmatrix},\qquad t\mapsto\vect q_{2,t}:=-r_{2}\begin{pmatrix}\cos\omega t\\
\sin\omega t
\end{pmatrix}\label{eq:schild}
\end{equation}
which we call \emph{Schild solutions; }see \figref{orbits}.\end{defn}
\begin{thm}
\label{thm:schild}Given $m_{1,}m_{2},r_{1}>0$, $e_{1}\in\mathbb{R}$
there are $r_{2}>0$ and $\omega\in\mathbb{R}$ such that the Schild
solutions (\ref{eq:schild}) obey (\ref{eq:toy model}).

\begin{figure}
\begin{centering}
\hfill{}\includegraphics{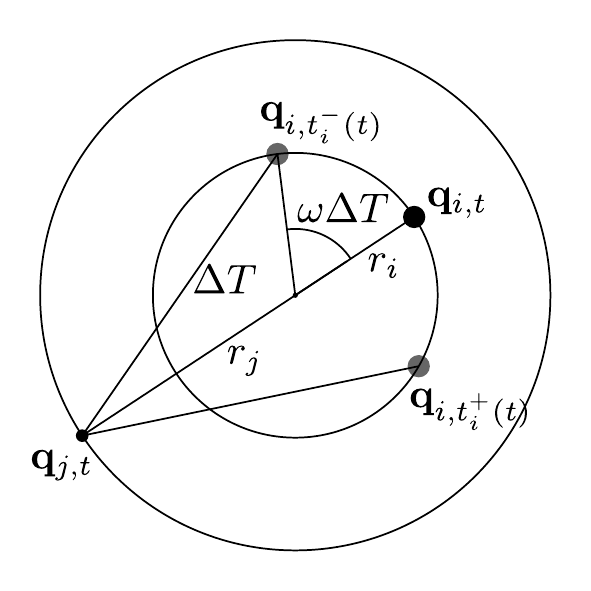}\hfill{}
\par\end{centering}

\caption{\label{fig:orbits}Stable orbit of two charges revolving around the
same center with radii $r_{i},r_{j}$. $\Delta T$ denotes the modulus
of the delay.}

\end{figure}
\end{thm}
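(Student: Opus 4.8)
The plan is to reduce (\ref{eq:toy model}) to a single algebraic balance by exploiting the rigid rotational symmetry of the ansatz (\ref{eq:schild}), and then to verify that balance. First I would note that rotation by $\omega t$ about the origin maps the time-$0$ configuration onto the time-$t$ configuration; since the force field $\vect F$, the delay relation (\ref{eq:wf_delay}) and the momentum map $\vect v(\cdot)$ are all equivariant under rotations, it suffices to check (\ref{eq:toy model}) at $t=0$. The velocity equation $\dot{\vect q}_{i,t}=\vect v(\vect p_{i,t})$ then holds by construction once $\vect p_{i,t}$ is defined through the relativistic relation, provided $r_i|\omega|<1$; this makes $\|\dot{\vect q}_{i,t}\|=r_i|\omega|$ and $\gamma(\omega r_i)$ constant, and differentiating $\vect p_{i,t}=m_i\gamma(\omega r_i)\dot{\vect q}_{i,t}$ gives the purely centripetal left-hand side $\dot{\vect p}_{i,0}=-m_i\gamma(\omega r_i)r_i\omega^2\,\vect q_{i,0}/\|\vect q_{i,0}\|$.

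Next I would determine the delayed times at $t=0$. Substituting (\ref{eq:schild}) into (\ref{eq:wf_delay}) shows that both $t_j^{+}$ and $t_j^{-}$ solve the single transcendental equation $\tau^2=r_1^2+r_2^2+2r_1r_2\cos\omega\tau$; calling its positive root $\Delta T$ and using the evenness of the right-hand side in $\tau$ gives $t_j^{\pm}(0,\vect q_{i,0})=\pm\Delta T$. The decisive observation is then geometric: the advanced and retarded source points $\vect q_{j,+\Delta T}$ and $\vect q_{j,-\Delta T}$ are reflections of one another across the line through the two charges, so the displacements $\vect q_{1,0}-\vect q_{2,\pm\Delta T}=(r_1+r_2\cos\omega\Delta T,\pm r_2\sin\omega\Delta T,0)$ (and the analogous pair for the second charge) have equal length $\Delta T$---by the delay relation---but opposite tangential parts. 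Hence in the sum $\vect F(\vect q_{i,0}-\vect q_{j,+\Delta T})+\vect F(\vect q_{i,0}-\vect q_{j,-\Delta T})$ the tangential contributions cancel and only the radial one survives. This cancellation, enforced by the symmetry of advanced and retarded interaction, is exactly what allows a rigidly rotating orbit to solve the time-symmetric delay equation, and it is the heart of the argument.

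With both sides radial, equating magnitudes (the inward orientations agreeing because $e_1e_2<0$) yields one scalar equation per charge, $m_i\gamma(\omega r_i)r_i\omega^2\Delta T^3=-2e_1e_2\,(r_i+r_j\cos\omega\Delta T)$ for $(i,j)\in\{(1,2),(2,1)\}$, to be solved jointly with $\Delta T^2=r_1^2+r_2^2+2r_1r_2\cos\omega\Delta T$. Taking the quotient of the two charge equations eliminates $e_1e_2$ and produces a single relation between $m_1/m_2$, the radii and $\omega$, which in the Newtonian limit reduces to the center-of-mass condition $m_1r_1=m_2r_2$ recorded in \defref{schild}; either equation alone then fixes the (negative) product $e_1e_2$, hence an admissible $e_2$, while back-substituting $\cos\omega\Delta T$ from the balance into the delay relation closes the system for $\Delta T$.

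The remaining and principal difficulty is existence: I must show that for given $m_1,m_2,r_1,e_1$ the coupled system admits a genuine solution with $r_2>0$, subluminal velocities $r_i|\omega|<1$, and $0<\omega\Delta T<\pi/2$ (the last guaranteeing $\cos\omega\Delta T>0$, so that the radial force is inward). I would treat $\omega$ as a continuation parameter and argue by the intermediate value theorem, anchoring at the Newtonian limit $\omega\to0$, where $\Delta T\to r_1+r_2$, $\cos\omega\Delta T\to1$, and the balance collapses to the familiar circular Kepler condition $m_ir_i\omega^2(r_1+r_2)^2=-2e_1e_2$; continuity in $\omega$ then carries this seed solution into the admissible relativistic window. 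Finally, because $\|\dot{\vect q}_{i,t}\|=r_i|\omega|$ and the equal-time separation $r_1+r_2$ are constant along the orbit, condition (\ref{eq:no singularity}) holds for all $t$, so $T_i^{\min}=-\infty$ and $T_i^{\max}=\infty$ and the Schild orbit is a global solution, in agreement with the remark after \thmref{uniquenss}.
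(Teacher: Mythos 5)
Your proposal is correct and follows essentially the same route as the paper: reduce by rotational symmetry to the centripetal balance at $t=0$, observe that the tangential parts of the advanced and retarded Coulomb forces cancel, couple the resulting radial balance to the delay relation $\Delta T^{2}=r_{1}^{2}+r_{2}^{2}+2r_{1}r_{2}\cos\omega\Delta T$, and close the system by choosing $e_{2}$ with $e_{1}e_{2}<0$ and $|\omega|$ small (your intermediate-value continuation from the Newtonian limit is the paper's ``$|\omega|$ sufficiently small'' made explicit). The one substantive divergence is the radial projection: your per-charge balance $m_{i}\gamma(\omega r_{i})r_{i}\omega^{2}\Delta T^{3}=-2e_{1}e_{2}\left(r_{i}+r_{j}\cos\omega\Delta T\right)$ is the careful computation, and its quotient yields a mass--radius--frequency relation that coincides with the relation $m_{1}\gamma(\omega r_{1})r_{1}=m_{2}\gamma(\omega r_{2})r_{2}$ of Definition \defref{schild} only as $\omega\to0$ or when $r_{1}=r_{2}$, whereas the paper asserts equal force moduli $\left|e_{1}e_{2}\right|2\cos(\omega\Delta T)/\Delta T^{2}$ on both charges, conflating the swept angle $\omega\Delta T$ with the angle between the delayed separation vector and the radius; this does not change the existence conclusion, since $r_{2}$, $\omega$ and $e_{2}$ remain free parameters, but your version of the balance is the one that should be used.
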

\begin{proof}
We start by rewriting (\ref{eq:toy model}) as a second order equation,
i.e.
\[
\frac{d}{dt}\left(m_{i}\gamma(\dot{\vect q}_{i,t})\dot{\vect q}_{i,t}\right)=e_{i}e_{j}\left[\vect F\left(\vect q_{i,t}-\vect q_{j,t_{j}^{+}(t,\vect q_{i,t})}\right)+\vect F\left(\vect q_{i,t}-\vect q_{j,t_{j}^{-}(t,\vect q_{i,t})}\right)\right],\qquad i\in\{1,2\},\quad i\neq j.
\]
If the charges stay on the circular orbits (\ref{eq:schild}) then
velocities and accelerations are constant and fulfill
\[
\left\Vert \dot{\vect q}_{i,t}\right\Vert =\omega r_{i},\qquad\left\Vert \ddot{\vect q}_{i,t}\right\Vert =\omega^{2}r_{i}.
\]
Furthermore, the kinematics dictate that the net force acting upon
the particles must be centripetal w.r.t. the origin and equal, i.e.
\[
\left\Vert \frac{d}{dt}\left(m_{1}\gamma(\dot{\vect q}_{1,t})\dot{\vect q}_{1,t}\right)\right\Vert =m_{1}\gamma(\omega r_{1})\omega^{2}r_{1}=m_{2}\gamma(\omega r_{2})\omega^{2}r_{2}
\]
which implies
\begin{equation}
\frac{m_{1}}{m_{2}}=\frac{\gamma(\omega r_{2})}{\gamma(\omega r_{1})}\frac{r_{2}}{r_{1}}.\label{eq:mass_radius_relation}
\end{equation}

Next, we compute the delay function $\Delta T>0$ which, according
to (\ref{eq:wf_delay}), was defined by
\begin{equation}
\Delta T^{2}:=\left(t_{j}^{\pm}(t)-t\right)^{2}=\left\Vert \vect q_{i,t}-\vect q_{j,t_{j}^{\pm}(t)}\right\Vert ^{2}=r_{1}^{2}+r_{2}^{2}+2r_{1}r_{2}\cos\left(\omega\Delta T\right).\label{eq:delay}
\end{equation}
Due to the symmetry in the advanced and retarded delay (see \figref{orbits})
the net force is centripetal w.r.t. the origin for
\[
e_{1}e_{2}<0
\]
and its modulus equals
\[
\left\Vert e_{i}e_{j}\left[\vect F\left(\vect q_{i,t}-\vect q_{j,t_{j}^{+}(t,\vect q_{i,t})}\right)+\vect F\left(\vect q_{i,t}-\vect q_{j,t_{j}^{-}(t,\vect q_{i,t})}\right)\right]\right\Vert =\left|e_{1}e_{2}\frac{2\cos\left(\omega\Delta T\right)}{\Delta T^{2}}\right|.
\]
This means in particular that
\[
m_{1}\gamma(\omega r_{1})\omega^{2}r_{1}=\left|e_{1}e_{2}\frac{2\cos\left(\omega\Delta T\right)}{\Delta T^{2}}\right|
\]
which by (\ref{eq:wf_delay}) allows to solve for $\Delta T$ according
to
\begin{equation}
\Delta T^{2}=\frac{r_{1}^{2}+r_{2}^{2}}{1+\frac{r_{1}r_{2}}{e_{1}e_{2}}m_{1}\gamma(\omega r_{1})r_{1}\omega^{2}}\label{eq:delay_squared}
\end{equation}
for certain values of $r_{2}$ and $\omega$.

We can choose $e_{2}\in\mathbb{R}$ such that $e_{1}e_{2}<0$, $r_{2}>0$
such that (\ref{eq:mass_radius_relation}) holds true, and $\omega\in\mathbb{R}\setminus\{0\}$
with $\left|\omega\right|$ sufficiently small such that (\ref{eq:delay_squared})
is well-defined and $0<\omega\Delta T<\frac{\pi}{2}$. This proves
the claim.
\end{proof}

\section{What can we learn from the approximate model?\label{sec:Conclusion}}

We emphasize that the presented construction of solutions relies sensitively
on the simplicity of the force field $\vect F$ in (\ref{eq:force field}),
which was chosen such that it has a global inverse $\vect I$ as defined
in \lemref{key_ingredient}. Any generalization of this force field
that does not have a global inverse, in particular the one of WF,
will require a new technique. However, the approximate model uses
the same delay function (\ref{eq:wf_delay}) that is also used in
WF. Therefore one can expect that many mathematical structures appearing
in the approximate model will also arise in WF. As examples we point
out that the data needed to define the energy functional discussed
in \secref{Constants-of-Motion} is exactly the same as the one needed
to define the corresponding energy functional in WF (compare figure
3 in \cite{wheeler_classical_1949}), and the stable orbits actually
coincide with the ones in WF (compare \cite{schild_electromagnetic_1963}).
Given these similarities it seems reasonable to expect that also in
WF the natural choice of initial data that uniquely identifies solutions
will be of the same type as in the approximate model given in \defref{initial_data},
i.e. \figref{Initial-data.}. In this respect it is comforting to
note that the considered initial data is already sufficient to compute
the energy functional. The only additional information in the initial
data considered here, i.e. (\ref{eq:fulfilled i}) and (\ref{eq:fulfilled j}),
is about how smooth the solutions are. Smoothness will become a more
delicate issue when considering the full WF interaction as (\ref{eq:LW_E})
also depends on the acceleration.

Concerning the dynamics of many charges the question of typical behavior
becomes relevant and a generalization to many particles of the approximate
model is a good candidate for studying measures of typicality for
delay dynamics of this kind. Note that while the generalization of
the above uniqueness result requires a slightly more sophisticated
proof the results of \secref{Constants-of-Motion} and \secref{Stable-Orbits}
have straight-forward generalizations to many charges.

\bibliographystyle{alpha}

\vskip2cm
\noindent\emph{Dirk - Andr\'e Deckert}\\
Department of Mathematics, University of California Davis,\\
One Shields Avenue, Davis, California 95616, USA\\
\texttt{deckert@math.ucdavis.edu}

\vskip.5cm
\noindent\emph{Detlef D\"urr}\\
Mathematisches Institut der LMU München,\\
 Theresienstraße 39, 80333 München, Germany\\
\texttt{duerr@math.lmu.de}

\vskip.5cm
\noindent\emph{Nicola Vona}\\
Mathematisches Institut der LMU München,\\
 Theresienstraße 39, 80333 München, Germany\\
\texttt{vona@math.lmu.de}

\end{document}